\documentclass[journal,12pt,draftclsnofoot,onecolumn]{IEEEtran}

\usepackage{graphicx}
\usepackage[center]{caption}
\usepackage{epsfig,latexsym,amsmath,amsfonts}
\usepackage{amssymb}
\usepackage{placeins}
\usepackage{subfigure}
\usepackage{verbatim}
\usepackage{cite,url}
\usepackage{epsf,bm}

\newtheorem{theorem}{Theorem}[section]
\newtheorem{lemma}[theorem]{Lemma}

\begin{document}

\title{On the Stability of Random Multiple Access with Feedback Exploitation and Queue Priority}
\author{\large  Karim G. Seddik\\
\normalsize  \begin{tabular}{c} Electronics Engineering Department, American University in Cairo \\ AUC Avenue, New Cairo 11835, Egypt.\\
email: kseddik@aucegypt.edu
\end{tabular} }

 \maketitle
% ------------------------------ ABSTRACT -----------------------------
\begin{abstract}
In this paper, we study the stability of two interacting queues under random multiple access in which the queues leverage the feedback information. We derive the stability region under random multiple access where one of the two queues exploits the feedback information and backs off under negative acknowledgement (NACK) and the other, higher priority, queue will access the channel with probability one. We characterize the stability region of this feedback-based random access protocol and prove that this derived stability region encloses the stability region of the conventional random access (RA) scheme that does not exploit the feedback information.
\end{abstract}

\newpage
%----------------------------------------------------------------------------------

%--------------------------------      INTRODUCTION      --------------------------

%----------------------------------------------------------------------------------

\section{Introduction}\label{Int}
The stability of interacting queues has been extensively considered in literature. Several works have considered the characterization of the stability region of interacting queues under random access protocols. The stability region is characterized for the case $M=2$ and $M=3$ interacting queues as well as the case of $M>3$ with symmetric arrivals. The stability region for the general case of $M>3$ with asymmetric arrivals is still an open problem and only inner achievable bounds are known.
 
Recently, many papers have considered the problem of interacting queues in different contexts. For example, \cite{KommalIT} considers the problem of interacting queues in a TDMA system where a relay is used to help the source nodes in forwarding their lost packets. In \cite{ephre-info}, the stability of interacting queues under a random access protocol in the context of \textit{Cognitive Radio Network} was derived. In \cite{ephre-isit}, the stability region of two interacting queues under random access protocol where the two queues harvest energy was characterized. Other works can be found in \cite{Fanous1,Fanous3}, where derivations of the stability regions in the context of different cognitive radio networks were considered.

In this paper, we derive the stability region of a two-queue random access (RA) protocol with priorities. The queues will apply the conventional RA protocol but in the case of packet loss due to collision the two queues will exploit the feedback information to provide some level of coordination. We set a priority to one of the two queues as follows. In the case of a negative acknowledgement, the queue with the higher priority will attempt transmission in the following time slot with probability one and the other queue will back off to allow for collision-free transmission of the higher priority queue. Clearly, this will enhance the service rate for the higher priority queue but more interestingly it will also improve the service rate for the other, less priority queue as will be explained later. We derive an expression for the boundary of the stability region and prove that the RA with priority scheme encloses the stability region of the conventional RA scheme. 

To the best of our knowledge, the problem of characterizing the stability region of the random access protocol with feedback leveraging has not been considered before. We will characterize the stable arrival rates region and prove that it contains that of the conventional random multiple access scheme (with no feedback exploitation).

The rest of the paper is organized as follows. The system model is presented in Section \ref{sysmod}. The performance of the proposed scheme is investigated in Section \ref{perfanal}. The paper is concluded in Section \ref{Concl}. We have moved most of the proofs to the appendices to preserve the flow of ideas in the paper.

%----------------------------------------------------------------------------------

%--------------------------------      System Model      --------------------------

%----------------------------------------------------------------------------------

\section{System Model}\label{sysmod}
The system model is shown in Fig. \ref{fig1}. We consider the case of two interacting packet queues, namely $Q_1$ and $Q_2$. $Q_1$ and $Q_2$ have infinite buffers for storing fixed length packets. The channel is slotted in time and any slot duration equals one packet transmission time. The arrival processes at the two queues, $Q_1$ and $Q_2$, are modeled as Bernoulli arrival processes with means $\lambda_1$ and $\lambda_2$, respectively \cite{ephre-isit}. Under our system model assumptions, the average arrival rates are $\lambda_1$ and $\lambda_2$ packets per time slot, and are bounded as $0<\lambda_i<1$, $i=1,2$\footnote{The maximum service rate in our model is 1 packet/slot, since the slot duration equals one packet transmission time, then the arrival rates must be less than 1 otherwise the system will be unstable \cite{ephre-isit}.}. We can assume that the packets arrive at the start of the time slot.

\begin{figure}
\centering
\includegraphics[scale=.5]{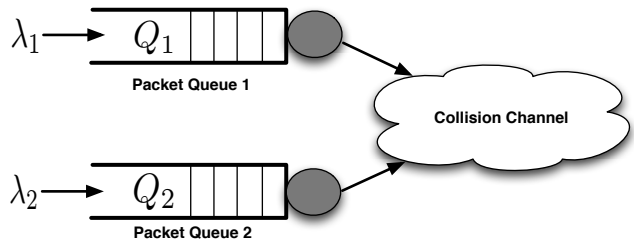}
\caption{The system model.}
\label{fig1}
\end{figure}

%The batteries used to store the harvested energy are modeled by the queues $B_1$ and $B_2$ for $Q_1$ and $Q_2$, respectively. The harvested energy is modeled as a Bernoulli arrival process with means $\delta_1$ and $\delta_2$, respectively. Under our system model assumptions, the energy average arrival rates are $\delta_1$ and $\delta_2$ energy units per time slot\footnote{one energy unit is the energy needed to transmit one packet in one time slot}, and are bounded as $0<\lambda_i<1$, $i=1,2$ \cite{Epher}.

The channel is modeled as a collision channel, where packet loss results only in the case of simultaneous transmissions from the two queues. If only one queue attempts transmitting at a given time slot, the packet is considered to be correctly received \cite{Rao88,ephre-isit}. In the random access phase, the first queue accesses the channel with probability $p_1$ whenever it has packets to send and the second queue will access the channel with probability $p_2$ whenever it has packets to send. If at any time slot some queue is empty, it will not attempt any channel access. 

In this paper, we will consider the use of the feedback information that is leveraged at the queues in the case of collision. In the conventional random multiple access system and in the case of collision, the collided packets stay on the head of the queues and retransmissions are attempted employing the same random multiple access scheme. In this paper, we consider a system where the feedback information is leveraged at the queues and a priority is set to the first queue; in the next time slot after collision, queue 2 ($Q_2$) will back off and queue 1 ($Q_1$) will retransmit its collided packet to allow for collision-free transmission of $Q_1$; after that the two queues return to the conventional random multiple access scheme. The priority set to queue 1 can be due to some quality of service (QoS) requirement that is different from the QoS requirement of queue 2. The interesting result is that although the feedback will enhance the service of queue 1 by setting a higher priority to it, the service will be enhanced as well for queue 2 as will be explained later.

\section{The Stability Region for the Feedback-Based Random Access Protocol with Priorities}\label{perfanal}
In this section, we will characterize the stability region for the feedback-based random access scheme. Stability can be loosely defined as having a certain quantity of interest kept bounded. In our case, we are interested in the queue size being bounded. For an irreducible and aperiodic Markov chain with countable number of states, the chain is stable if and only if it is positive recurrent, which implies the existence of its stationary distribution. For a rigorous definition of stability under more general scenarios see
\cite{Rao88} and \cite{Szpan94}.

If the arrival and service processes of a queueing system are strictly stationary, then one can apply Loynes's theorem to check
for stability conditions \cite{Loynes}. This theorem states that if the arrival process and the service process of a queueing system are
strictly stationary, and the average arrival rate is less than the average service rate, then the queue is stable, otherwise it is unstable.

Characterizing the stability region will be a difficult problem due to the interaction of the two queues and due to the fact that the service for one queue will depend on the state of the other queue. We will consider the use of the \textit{Dominant System} concept that was proposed in \cite{Rao88} to characterize the stability region of the conventional RA scheme. We will define two dominant systems tailored to match our feedback-based random access scheme and in each of the two systems we will determine the boundaries of the stability region.   

\subsection{Dominant System 1}
In any dominant system, we define a system that ``stochastically dominates'' our system, that is the queues lengths in the dominant system are always larger than the queues lengths in our system if both, the dominant system and our system, start from the same initial state and have the same arrivals and encounter the same packet collisions.

\begin{figure*}
\centering
\includegraphics[scale=.14]{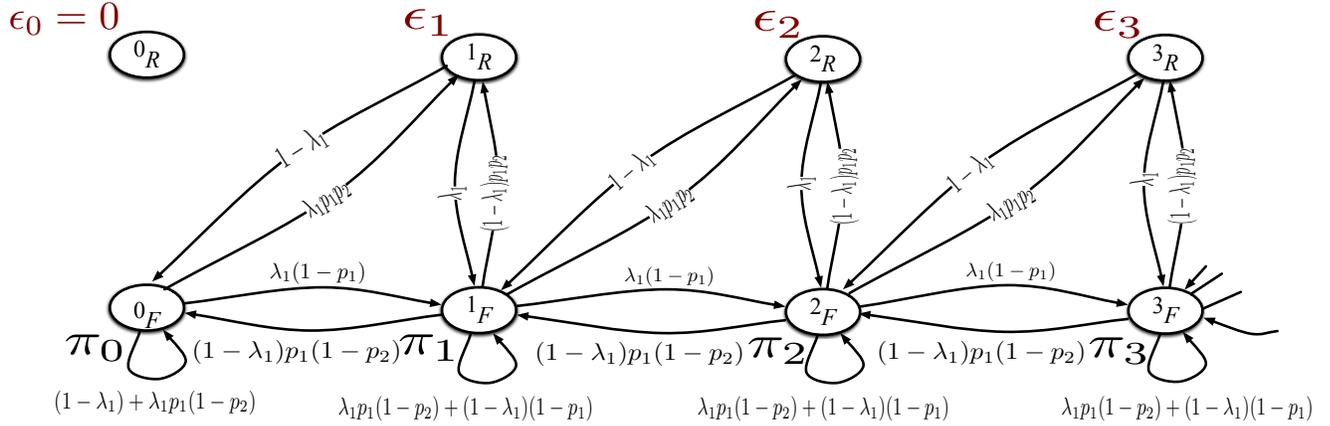}
\caption{Queue 1, $Q_1$, Markov chain model for Dominant System 1.}
\label{fig2}
\end{figure*}

For the first Dominant System, we assume that queue 2 will always have packets to transmit; even if the queue was empty dummy packets will be transmitted from queue 2. Clearly this will set a dominant system to our system since the transmission of dummy packets can only result in more collisions and packet losses. If for a given arrival rate pair ($\lambda_1$, $\lambda_2$) the first dominant system is stable then clearly our system will be stable. Therefore, the stability region of the first dominant system will provide an inner bound for our system stability region.

For queue 1, the Markov chain describing the evolution of the queue is shown in Fig. \ref{fig2}. Note that the Markov chain has two classes of states, namely, $k_F$ and $k_R$ and $k=0,1,2,\cdots$. The subscript $F$ denotes first transmission states and the subscript $R$ denotes retransmission states. Note that in the retransmission states, queue 1 packet will always be delivered since there is no collisions in these states (queue 2 is backing off); in these states, either queue 1 length decreases by 1 if no arrival occurs or the queue length will remain the same if an arrival occurs while being in these retransmission states since the packet on the head of the queue is successfully transmitted with probability 1. 

The stability condition for queue 1 in Dominant System 1 is given in the following lemma, which is proved in Appendix~\ref{app1}.

\begin{lemma}\label{lem1}
The arrival rates for queue 1 and queue 2 in Dominant System 1 must satisfy the following two conditions, respectively,
\begin{equation}
\begin{split}
\lambda_1 &<\frac{p_1}{1+p_1p_2}\\
\lambda_2 &<p_2(1-\lambda_1-\lambda_1p_2)
\end{split}
\end{equation}
for the system to be stable. 
\end{lemma}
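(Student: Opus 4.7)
The plan is to characterize the stationary behavior of the $Q_1$ Markov chain of Fig.~\ref{fig2} and then use the resulting occupation probabilities to derive the service rate seen by $Q_2$. Let $\pi_0$, $\pi_F$ and $\pi_R$ denote the stationary probabilities that $Q_1$ is empty, is in a nonempty first-transmission state, and is in a retransmission state, respectively. The key structural observation is that a transition from the F class into the R class occurs only via a collision, which in Dominant System~1 happens with probability $p_1 p_2$ whenever $Q_1$ is nonempty and in F (since $Q_2$ is saturated and attempts with probability $p_2$), while every visit to R lasts exactly one slot and is followed by a guaranteed successful transmission returning the chain to an F state.

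First, I would use a flow balance between the F and R classes to obtain the relation $\pi_R = \pi_F\, p_1 p_2$. Second, equating the long-run arrival and departure rates of $Q_1$ (which must hold whenever the chain is positive recurrent) yields $\lambda_1 = \pi_F\, p_1(1-p_2) + \pi_R$, since $Q_1$ successfully departs a packet in F only when it transmits alone and in R with probability one. Solving these two linear relations gives $\pi_F = \lambda_1/p_1$, $\pi_R = \lambda_1 p_2$, and hence $\pi_0 = 1 - \lambda_1/p_1 - \lambda_1 p_2$. The positivity requirement $\pi_0 > 0$ then recovers the first inequality $\lambda_1 < p_1/(1+p_1 p_2)$; the converse direction follows because once $\lambda_1$ exceeds this threshold the per-slot drift of the $Q_1$ length becomes non-negative, ruling out positive recurrence via Loynes's theorem.

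With these occupation probabilities pinned down, I would compute the service rate of $Q_2$ in Dominant System~1 by conditioning on the phase of $Q_1$. The queue $Q_2$ attempts transmission with probability $p_2$ except when $Q_1$ is in R, in which case $Q_2$ backs off, and an attempt succeeds iff $Q_1$ does not transmit in the same slot. This yields
\[
\mu_2 \;=\; p_2\, \pi_0 \;+\; p_2(1-p_1)\,\pi_F \;+\; 0\cdot \pi_R,
\]
and after substituting the expressions for $\pi_0$ and $\pi_F$ the right-hand side collapses to $p_2\bigl(1-\lambda_1-\lambda_1 p_2\bigr)$. Combining Loynes's theorem for $Q_2$ with the standard dominant-system indistinguishability argument of \cite{Rao88} then gives the second inequality $\lambda_2 < p_2(1-\lambda_1-\lambda_1 p_2)$.

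The step I expect to be the main obstacle is the rigorous justification of the flow-balance shortcut for $Q_1$: strictly speaking the chain in Fig.~\ref{fig2} is two-dimensional because queue length is coupled to the phase $\{F,R\}$, so the identity $\pi_R = \pi_F\, p_1 p_2$ needs to be argued at the level of the joint stationary law rather than merely heuristically. I would handle this either by guessing a product-form stationary distribution (geometric in $k$ times a phase marginal) and verifying the global balance equations, or more cleanly by applying Loynes's theorem directly to the net service process of $Q_1$ and invoking the ergodic theorem to identify the long-run fractions of F and R slots, sidestepping an explicit solution of the chain.
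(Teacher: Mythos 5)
Your final expressions are correct and your route is genuinely different from the paper's. The paper solves the Markov chain of Fig.~\ref{fig2} explicitly: it writes the balance equations state by state, shows the stationary distribution is geometric, $\pi_k=\rho^k\pi_0$ and $\epsilon_k=\rho^{k-1}\epsilon_1$ with $\rho=\frac{\lambda_1(1-p_1+\lambda_1p_1p_2)}{p_1(1-\lambda_1)(1-\lambda_1p_2)}$, obtains the first condition from the normalization requirement $\rho<1$, computes $\pi_0$ in closed form, and only then evaluates $\mu_2$ by conditioning on queue~1's state and invokes Loynes. You instead collapse the chain into three phases (empty, F, R), use F$\leftrightarrow$R flow balance plus the rate-conservation identity ``arrival rate $=$ departure rate'' to pin down the phase occupancies $\pi_F=\lambda_1/p_1$, $\pi_R=\lambda_1p_2$, and read off both conditions from $\pi_0>0$ and from $\lambda_2<\mu_2$. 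Your aggregate occupancies do agree with the paper's solution (summing the paper's $\epsilon_k$ indeed gives $\lambda_1p_2$), and your $\mu_2$ collapses to the same $p_2(1-\lambda_1-\lambda_1p_2)$ as in (\ref{Dom2Qu2SerFin}). What your shortcut buys is brevity and insight (no need to solve an infinite chain); what the paper's explicit solution buys is exactly the rigor you flag as the main obstacle: the mean-value/rate-conservation identities presuppose positive recurrence, so by themselves they give only necessity of $\lambda_1<p_1/(1+p_1p_2)$, whereas exhibiting the geometric stationary distribution and requiring $\rho<1$ gives the condition as both necessary and sufficient. Your proposed fix (guess a geometric-times-phase product form and verify global balance) is precisely what the paper does, so the two arguments meet there.

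One concrete caution: with the paper's ``arrivals at the start of the slot'' convention, an empty queue that receives a fresh arrival transmits in that same slot, and the chain of Fig.~\ref{fig2} has a transition $0_F\rightarrow 1_R$ with probability $\lambda_1p_1p_2$. Hence your two relations $\pi_R=\pi_F\,p_1p_2$ and $\lambda_1=\pi_F\,p_1(1-p_2)+\pi_R$, and your formula $\mu_2=p_2\pi_0+p_2(1-p_1)\pi_F$, are valid only if $\pi_F$ is interpreted as the probability that queue~1 has a packet available at transmission time (including a fresh arrival to an empty queue) and $\pi_0$ as ``empty \emph{and} no arrival in the current slot.'' Under that reading your numbers match the paper exactly (the paper's empty-state probability satisfies $(1-\lambda_1)\pi_0^{\mathrm{paper}}=1-\lambda_1/p_1-\lambda_1p_2$). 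Under the literal ``empty at slot start'' reading your F-to-R balance would be missing the term $\lambda_1p_1p_2\pi_0$ and the service-rate computation would need the paper's split $p_2(1-\lambda_1)\pi_0+p_2(1-p_1)\lambda_1\pi_0$ for the $0_F$ state, so you should state the phase definitions explicitly before the balance step.
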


%\begin{proof}
%skjdlkasjd \qedhere
%\end{proof}

\subsection{Dominant System 2}
In the second Dominant System, we assume that queue 1 always has packets to send (dummy packets are sent if the queue decides to transmit while being empty). Again, this will decouple the interaction of the two queues since the service rate of queue 2 will be independent of the state of queue 1.

The Markov chain for the evolution of queue 2 is shown in Fig. \ref{fig3}. Two classes of states are defined in Fig. \ref{fig3} and denoted by the subscripts \textbf{ON} and \textbf{OFF}. The ON states denote the states where queue 2 can access the channel. The OFF states denote the back off states where queue 1 is retransmitting its collided packets. Note that the transitions from the $k_{\text{OFF}}$ state can be either to the $k_{\text{ON}}$ state, if no arrival occurs in the slot, or to the $(k+1)_{\text{ON}}$ state, if one arrival occurs in the slot. The OFF states can never make a transition to a state with a lower number of packets since in the OFF states queue 2 is in the back off mode and no access is attempted. 

The stability condition for queue 2 in Dominant System 2 is given in the following lemma, which is proved in Appendix~\ref{app2} (the analysis in Appendix~\ref{app2} will be based on the theory of homogeneous quasi birth-and-death (QBD) Markov chains \cite{latouche}).

\begin{lemma}\label{lem2}
The arrival rates for queue 1 and queue 2 in Dominant System 2 must satisfy the following two conditions, respectively,
\begin{equation}
\begin{split}
\lambda_1&<\frac{p_1(1-p_1-\lambda_2p_1)}{(1-p_1)}
\\ \lambda_2 &<\frac{p_2(1-p_1)}{1+p_1p_2}
\end{split}
\end{equation}
for the system to be stable. 
\end{lemma}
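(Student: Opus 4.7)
The plan is to split Lemma~\ref{lem2} into two pieces, one per queue, mirroring the structure used for Lemma~\ref{lem1}. For queue~2 I would treat the Markov chain in Fig.~\ref{fig3} as a homogeneous QBD whose level is the queue-2 length and whose phase lies in $\{\mathrm{ON},\mathrm{OFF}\}$. At any level $k\ge 1$ I would write down the three level-homogeneous transition matrices $A_0,A_1,A_2$ (level decrement, no change, increment) by enumerating the four possible slot outcomes---collision with probability $p_1p_2$, queue~2 succeeding alone with $(1-p_1)p_2$, queue~1 succeeding alone with $p_1(1-p_2)$, and an idle slot with $(1-p_1)(1-p_2)$---each composed with the Bernoulli$(\lambda_2)$ arrival. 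Aggregating these and forming $A=A_0+A_1+A_2$ yields a $2\times 2$ phase matrix whose stationary vector is $\pi_{\mathrm{ON}}=1/(1+p_1p_2)$, $\pi_{\mathrm{OFF}}=p_1p_2/(1+p_1p_2)$. Invoking the standard QBD mean-drift positive-recurrence criterion $\pi A_2\mathbf{1}<\pi A_0\mathbf{1}$ \cite{latouche} and simplifying should collapse directly to $\lambda_2<p_2(1-p_1)/(1+p_1p_2)$, the second claimed inequality.

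For the first inequality I would exploit that queue~1 is saturated in DS2, so its long-run departure rate $\mu_1$ coincides with the service rate offered to the real queue~1, and Loynes's theorem then gives stability iff $\lambda_1<\mu_1$. To evaluate $\mu_1$ I avoid solving the QBD in full and instead use two rate-conservation identities valid when queue~2 is stable. First, queue~2's throughput equals its arrival rate, so
\[
\lambda_2=\pi_{\mathrm{ON},\,Q_2>0}\,p_2(1-p_1)\quad\Longrightarrow\quad \pi_{\mathrm{ON},\,Q_2>0}=\frac{\lambda_2}{p_2(1-p_1)}.
\]
Second, every OFF slot is triggered by a collision in the preceding slot and lasts exactly one slot, so $\pi_{\mathrm{OFF}}=\pi_{\mathrm{ON},\,Q_2>0}\cdot p_1p_2=\lambda_2p_1/(1-p_1)$, from which $\pi_{\mathrm{ON},\,Q_2=0}=1-\pi_{\mathrm{OFF}}-\pi_{\mathrm{ON},\,Q_2>0}$ follows by complementation.

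Then I would decompose queue~1's service by sub-state: in OFF queue~1 succeeds with probability~$1$, in ON with queue~2 empty it succeeds with probability~$p_1$, and in ON with queue~2 non-empty it succeeds with probability~$p_1(1-p_2)$. Summing gives $\mu_1=\pi_{\mathrm{OFF}}+p_1\pi_{\mathrm{ON},\,Q_2=0}+p_1(1-p_2)\pi_{\mathrm{ON},\,Q_2>0}$, and substituting the expressions above should cancel the $\lambda_2$ cross-terms and leave $\mu_1=p_1(1-p_1-\lambda_2 p_1)/(1-p_1)$, which is the first claimed inequality via $\lambda_1<\mu_1$.

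The main obstacle I anticipate is computing $\pi_{\mathrm{OFF}}$ correctly. The QBD analysis returns $p_1p_2/(1+p_1p_2)$ only as the phase marginal \emph{within} levels $k\ge 1$; the true $\pi_{\mathrm{OFF}}$ across the full state space differs because $0_{\mathrm{OFF}}$ is unreachable and the boundary mass at $0_{\mathrm{ON}}$ reweights the phase marginal. The collision-rate identity circumvents this boundary correction by relating $\pi_{\mathrm{OFF}}$ directly to queue~2's throughput, and getting this accounting right---together with confirming that the resulting $\mu_1$ is indeed monotone in $\lambda_2$ so the dominance argument from DS2 to the original system stays valid---is where I would spend the most care.
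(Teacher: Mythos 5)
Your proposal is correct, and it reaches both inequalities by a genuinely different route than the paper. For the $\lambda_2$ condition, the paper (Appendix~\ref{app2}) solves the QBD explicitly: it guesses/derives the rate matrix $\mathbf{R}$ entry by entry from balance equations, verifies the quadratic matrix equation, and reads off stability from $\textbf{sp}(\mathbf{R})<1$; you instead invoke the mean-drift criterion on $A=A_0+A_1+A_2$, whose phase-marginal stationary vector is indeed $\bigl(1/(1+p_1p_2),\,p_1p_2/(1+p_1p_2)\bigr)$ (the same two-state ON/OFF chain the paper uses for Dominant System 3), and the drift inequality collapses to $\lambda_2<p_2(1-p_1)/(1+p_1p_2)$ exactly as you predict. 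For the $\lambda_1$ condition, the paper computes $\mu_1'$ by summing the matrix-geometric stationary distribution, $p_1(1-\lambda_2p_2)\pi_0'+[p_1(1-p_2)\;1]\,\mathbf{R}(\mathbf{I}_2-\mathbf{R})^{-1}[\pi_0'\;0]^T$, which requires $\pi_0'$ and $(\mathbf{I}_2-\mathbf{R})^{-1}$; your rate-conservation shortcut (throughput $=\lambda_2$ gives the ON-and-backlogged probability, one OFF slot per collision gives $\pi_{\mathrm{OFF}}=\lambda_2p_1/(1-p_1)$, complementation gives the rest) bypasses $\mathbf{R}$ entirely and, on substitution, yields the same $\mu_1'=p_1(1-p_1-\lambda_2p_1)/(1-p_1)$; I checked this numerically against the paper's stationary distribution and the three occupancy probabilities agree term by term. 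Your approach buys brevity and avoids the tedious matrix algebra, at the cost of not producing the stationary distribution itself (which the paper gets as a by-product, e.g.\ $\pi_0'$), and it implicitly presupposes queue~2's ergodicity, which is fine since the lemma imposes the $\lambda_2$ condition jointly. The one place to be explicit: because arrivals occur at the start of the slot, ``ON with queue~2 empty/non-empty'' must be classified at the transmission epoch (after the slot's arrival); the paper instead indexes states by pre-arrival occupancy and therefore needs the boundary correction $p_1(1-\lambda_2p_2)\pi_0'$ at $0_{\text{ON}}$. With the post-arrival convention stated, your three identities and your service decomposition are mutually consistent and the $\lambda_2$ cross-terms cancel as claimed, so this is a presentational caveat rather than a gap.
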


\begin{figure*}
\centering
\includegraphics[scale=.14]{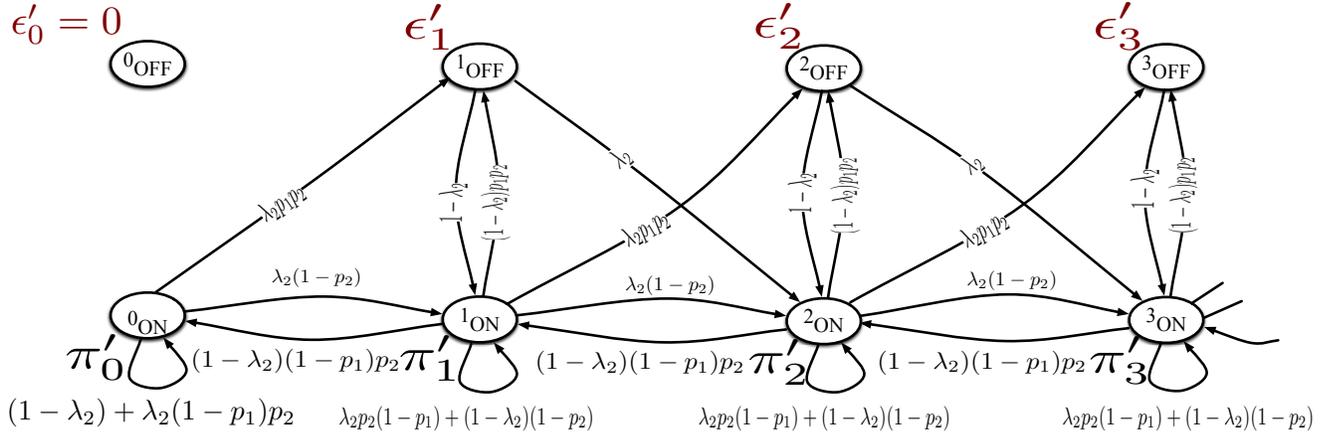}
\caption{Queue 2, $Q_2$, Markov chain model for Dominant System 2.}
\label{fig3}
\end{figure*}

Note that the intersection of the two stability regions described in Lemma \ref{lem1} and Lemma \ref{lem2} for a given access vector $\mathbf{p} =[p_1\;p_2]^T$ (grey area in Fig. \ref{stab}) can be interpreted as follows. Define a new Dominant System (Dominant System 3) in which every queue has always a packet to transmit. In this case, the transmission state of queue 1 can be represented by the two-state Markov chain model shown in Fig~\ref{fig:dom3-1}; note that in this case queue 1 will be either in the ``Transmission'' state denoted by $F$ or in the ``Retransmission'' state denoted by $R$ in Fig.~\ref{fig:dom3-1}. Fig.~\ref{fig:dom3-2} shows the Markov chain model for queue 2. Queue 2 will have two states denoted by ON when queue 1 is in the F state and OFF when queue 1 is the $R$ state (when queue 1 is in the $R$ state queue 2 will be in the back off, OFF state). It is straightforward to show that the steady state distributions for the two Markov chains shown in Fig.~\ref{fig:marin} are given by
\begin{equation}
\begin{split}
\pi_F &= \pi_{\text{ON}}=\frac{1}{1+p_1p_2}\\
\pi_R &= \pi_{\text{OFF}}=\frac{p_1p_2}{1+p_1p_2}.
\end{split}
\end{equation}

\begin{figure}
\centering
\includegraphics[scale=.5]{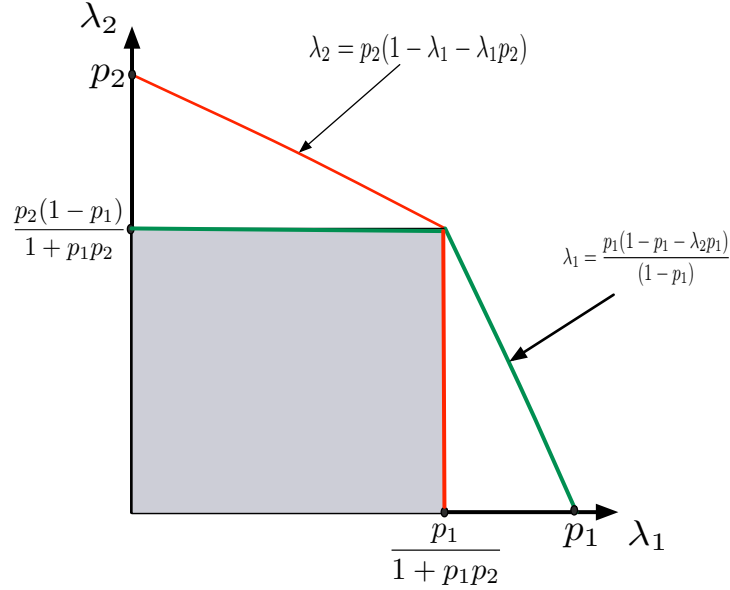}
\caption{The union of the stability regions for the two dominant systems for fixed access probabilities $p_1$ and $p_2$.}
\label{stab}
\end{figure}

The service rate for queue 1 in Dominant System 3, $\mu_{1}''$, is given by
\begin{equation}
\begin{split}
\mu_{1}''=p_1(1-p_2)\pi_F+\pi_R = \frac{p_1}{1+p_1p_2},
\end{split}
\end{equation}
where queue 1 is served with probability $p_1(1-p_2)$ in the $F$ state and with probability 1 in the $R$ state.

The service rate for queue 2 in Dominant System 3, $\mu_{2}''$, is given by
\begin{equation}
\begin{split}
\mu_{2}''=p_2(1-p_1)\pi_{\text{ON}}+0\times \pi_{\text{OFF}} = \frac{p_2(1-p_1)}{1+p_1p_2},
\end{split}
\end{equation}
where queue 2 is served with probability $p_2(1-p_1)$ in the ON state and with probability 0 in the OFF state.

\begin{figure}
 \centering
 \subfigure[The two-state Markov chain model for queue 1 transmission state in Dominant System 3.]{
  \includegraphics[scale=0.18]{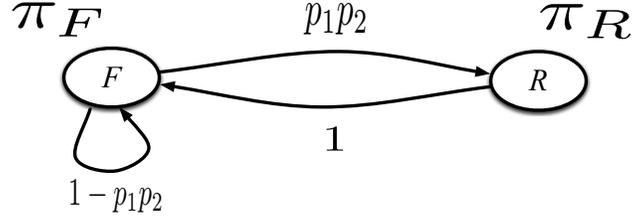}
   \label{fig:dom3-1}
   }
 \subfigure[The two-state Markov chain model for queue 2 transmission state in Dominant System 3.]{
  \includegraphics[scale=0.18]{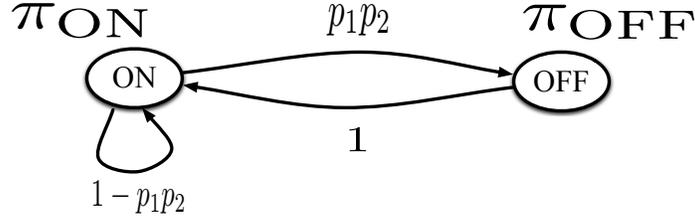}
   \label{fig:dom3-2}
   }
   \caption{Dominant System 3 Markov chain model.}
   \label{fig:marin}
\end{figure}

\subsection{The Stability Region of the Random Access Protocol with Priorities}
In this section, we derive the expression for the stability region of the random access scheme with feedback exploitation where a priority is set to one of the two queues. The following Lemma characterizes the stability region for fixed random access probabilities, $p_1$ and $p_2$, for queue 1 and queue 2, respectively.

\begin{lemma}\label{lem3}
For a fixed random access probability vector $\mathbf{p}=[p_1\; p_2]^T$, the stability region $\mathcal{R}(\mathbf{p})$ of the random access with priorities is the union of the two regions described by
\begin{equation}
\begin{split}
\lambda_2 <p_2(1-\lambda_1-\lambda_1p_2)
\end{split}
\end{equation}
when
\begin{equation}
\begin{split}
\lambda_1 <\frac{p_1}{1+p_1p_2}
\end{split}
\end{equation}
and
\begin{equation}
\begin{split}
\lambda_1 <\frac{p_1(1-p_1-\lambda_2p_1)}{(1-p_1)}
\end{split}
\end{equation}
when
\begin{equation}
\begin{split}
\lambda_2 <\frac{p_2(1-p_1)}{1+p_1p_2}.
\end{split}
\end{equation}
for the system to be stable. 
\end{lemma}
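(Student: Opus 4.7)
The plan is to assemble Lemma~\ref{lem3} from Lemmas~\ref{lem1} and \ref{lem2} using the dominant-system technique of Rao--Ephremides \cite{Rao88}. The sufficiency direction (that the union lies in $\mathcal{R}(\mathbf{p})$) is immediate from a pathwise stochastic coupling: couple the original system and each dominant system on a common probability space with identical arrival streams and access coin flips. Because queue~2's dummy transmissions in Dominant System~1 can only produce extra collisions, the queue lengths in D1 pointwise dominate those in the original, and D1-stability implies original-stability. Hence whenever $(\lambda_1,\lambda_2)$ satisfies the conditions of Lemma~\ref{lem1}, the point lies in $\mathcal{R}(\mathbf{p})$, and the symmetric argument via Dominant System~2 handles points satisfying the conditions of Lemma~\ref{lem2}. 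Taking the union yields the desired inclusion and already gives the two piecewise descriptions stated in the lemma.

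For the necessity direction (that nothing outside the union can be stable), I would invoke the indistinguishability argument. Fix a point $(\lambda_1,\lambda_2)$ that violates both dominant systems' conditions. Then at least one dominant system is unstable; say it is Dominant System~1. Under the coupling, queue~1 in D1 diverges, and because queue~2 in D1 is saturated, queue~1 in D1 and queue~1 in the original evolve identically in every slot in which the original's queue~2 is non-empty. If the original system were stable, queue~2 in the original would be non-empty with positive long-run fraction (for any $\lambda_2>0$), and the service process of queue~1 in the original would differ from that in D1 only on a set of slots of negligible density; a standard Loynes-type comparison then forces queue~1 in the original to diverge as well, a contradiction. The symmetric argument using Dominant System~2 rules out the remaining case, so the union is indeed exact.

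The main obstacle will be formalizing the indistinguishability step in the presence of the two-class F/R priority structure, since a collision at slot $t$ deterministically forces queue~2 OFF at slot $t{+}1$. The coupling therefore has to keep track not only of queue occupancies but also of the last-slot collision outcome, so that the F/R phase is aligned across the coupled systems. Once this enlarged state is carried through the coupling, the usual Rao--Ephremides dominance and indistinguishability arguments apply without essential modification, and the explicit boundaries from Lemmas~\ref{lem1} and \ref{lem2} combine to give the closed-form description of $\mathcal{R}(\mathbf{p})$ stated in Lemma~\ref{lem3}.
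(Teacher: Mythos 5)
Your overall strategy (stochastic dominance for sufficiency, the Rao--Ephremides indistinguishability argument for exactness, citing \cite{Rao88}) is the same as the paper's, and your sufficiency half is fine: stability of either dominant system implies stability of the original under the coupling, so the union of the two regions of Lemma~\ref{lem1} and Lemma~\ref{lem2} is contained in $\mathcal{R}(\mathbf{p})$. The paper's proof is nothing more than this sketch plus the standard indistinguishability remark.

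Your necessity step, however, does not work as written. First, from ``Dominant System~1 is unstable'' you conclude that \emph{queue~1} in D1 diverges; on the relevant part of the boundary the violated condition is $\lambda_2 \geq p_2(1-\lambda_1-\lambda_1 p_2)$ with $\lambda_1 < \frac{p_1}{1+p_1p_2}$, i.e.\ it is \emph{queue~2} in D1 that is unstable while queue~1 in D1 is stable, so your premise fails exactly where it is needed. Second, the claim that if the original system were stable its queue-1 service process would differ from D1's ``only on a set of slots of negligible density'' is backwards: under stability, queue~2 of the original is \emph{empty} a strictly positive long-run fraction of slots, and it is precisely in those slots that queue~1 of the original is served better than in D1 (no collisions and no forced back-off cycle); this positive-density discrepancy is the very reason the original can be stable where D1 is not, so the Loynes-type comparison cannot force queue~1 of the original to diverge. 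The correct indistinguishability argument runs in the opposite direction, as in the paper: when the artificially saturated partner's counterpart (queue~2 in D1, queue~1 in D2) is unstable in the dominant system, its queue eventually never empties, so the dummy packets are never actually transmitted and, under the coupling, the dominant and original systems follow identical sample paths; the instability therefore transfers to the original system, which shows that no point outside the union (equivalently, beyond the boundary traced by Lemmas~\ref{lem1} and~\ref{lem2}) can be stable. Your remark about tracking the last-slot collision outcome so that the F/R (ON/OFF) phase stays aligned in the coupling is a sensible technical point, but it does not repair the inverted density argument above.
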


\begin{proof}
The result in Lemma \ref{lem3} can be proved using the tool of stochastic dominance presented in \cite{Rao88}. The indistinguishability argument at the stability region boundary states that if the original system is unstable then its queues will saturate and they will always have packets to transmit; therefore at the boundaries of the stability region of the original system, the original system will be indistinguishable from the dominant system and thus has the same stability region boundaries \cite{Rao88}.
\end{proof}
%Note that it always includes the stability region for the ALOHA system with the same random access vector. The stability region for the ALOHA system with access probabilities $p_1$ and $p_2$ is given by \cite{Rao88}
%\begin{equation}
%\begin{split}
%\lambda_2 < p_2\left(1-\frac{\lambda_1}{p_1(1-p_2)}\right)+p_2(1-p_1)\frac{\lambda_1}{p_1(1-p_2)}
%\end{split}
%\end{equation}
%when
%\begin{equation}
%\begin{split}
%\lambda_1 <p_1(1-p_2)
%\end{split}
%\end{equation}
%and
%\begin{equation}
%\begin{split}
%\lambda_1 < p_1\left(1-\frac{\lambda_2}{p_2(1-p_1)}\right)+p_1(1-p_2)\frac{\lambda_2}{p_2(1-p_1)}
%\end{split}
%\end{equation}
%when
%\begin{equation}
%\begin{split}
%\lambda_2 <p_2(1-p_1).
%\end{split}
%\end{equation}
%In Fig.~\ref{fig}, we compare the stability region for a given random access vector for the ALOHA, ALOHA with feedback exploitation and TDMA. Note that TDMA serves as the system that includes the stability region of any other access protocol which corresponds to full coordination between the queues.

The next theorem characterizes the entire stability region for the random access protocol with priorities.
\begin{theorem}\label{thm1}
The boundary of the stability region, $\mathcal{R}$, of the random access protocol with priorities, which is defined as the union of the $\mathcal{R}(\mathbf{p})$ regions for the different $\mathbf{p}=[p_1\;p_2]^T$ as
\begin{equation}
\begin{split}
\mathcal{R} =\bigcup_{\mathbf{p}\in[0,1]^2} \mathcal{R}(\mathbf{p})
\end{split}
\end{equation}
can be characterized as 
\begin{equation}
\begin{split}
\lambda_2 =\left\{\begin{array}{ll}
1-2\lambda_1& \lambda_1 \leq \frac{1}{3} \\
\frac{(1-\lambda_1)^2}{4\lambda_1} & \lambda_1 >\frac{1}{3}.
\end{array}\right. 
\end{split}
\end{equation}
\end{theorem}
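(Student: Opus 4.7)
The plan is to compute, for each $\lambda_1\in(0,1)$, the supremum value of $\lambda_2$ such that $(\lambda_1,\lambda_2)\in\mathcal{R}$. By Lemma~\ref{lem3}, for each $\mathbf{p}$ the region $\mathcal{R}(\mathbf{p})$ is a union of two sub-regions, so $\mathcal{R}$ is the union over $\mathbf{p}\in[0,1]^2$ of these two families of sets. I would compute the outer envelope of each family separately and show that both trace out exactly the boundary claimed in the theorem.

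For the first family, the bound $\lambda_2<p_2(1-\lambda_1-\lambda_1 p_2)$ involves only $p_2$, so I would fix $\lambda_1$ and maximize this concave quadratic on $[0,1]$. The interior critical point $p_2^{\dagger}=(1-\lambda_1)/(2\lambda_1)$ lies in $[0,1]$ iff $\lambda_1\geq 1/3$, giving the value $(1-\lambda_1)^2/(4\lambda_1)$; for $\lambda_1<1/3$ the maximum sits at the boundary $p_2=1$ and equals $1-2\lambda_1$. I would then verify that the companion constraint $\lambda_1<p_1/(1+p_1 p_2)$ can be met by taking $p_1=1$, which works for all $\lambda_1<1$ at either choice of $p_2$ above. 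This delivers the achievability direction of the theorem.

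The second family is two-dimensional, but the same pattern applies. Since the second constraint $\lambda_2<p_2(1-p_1)/(1+p_1 p_2)$ is increasing in $p_2$, I would set $p_2=1$; this collapses it to $p_1<(1-\lambda_2)/(1+\lambda_2)$. Over this admissible interval I would maximize $h(p_1):=p_1[1-(1+\lambda_2)p_1]/(1-p_1)$. The critical equation $(1+\lambda_2)p_1^2-2(1+\lambda_2)p_1+1=0$ has admissible root $p_1^{\star}=1-\sqrt{\lambda_2/(1+\lambda_2)}$, which falls inside the admissible interval iff $\lambda_2<1/3$; using the critical equation itself to simplify, $h(p_1^{\star})$ collapses to a form that, after solving for $\lambda_1$ in terms of $\lambda_2$, is exactly the curve $\lambda_2=(1-\lambda_1)^2/(4\lambda_1)$. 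For $\lambda_2\geq 1/3$ the maximum sits at the endpoint $p_1=(1-\lambda_2)/(1+\lambda_2)$ and yields $\lambda_2=1-2\lambda_1$. Hence the second family reproduces the same envelope as the first, so no $\mathbf{p}$ admits a point strictly above the claimed curve; this is the converse.

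The main obstacle is the second-family analysis: after the reduction to one variable, the optimizer $p_1^{\star}$ is irrational and the envelope first appears in parametric form, so recognizing it as the same rational curve $\lambda_2=(1-\lambda_1)^2/(4\lambda_1)$ produced by the first family is the key algebraic step. Confirming the regime switch at $\lambda_2=1/3$, which corresponds exactly to the kink $\lambda_1=1/3$ in the theorem's final expression, also requires care in checking the admissibility window of $p_1^{\star}$. Everything else is single-variable concave-quadratic optimization and routine verification of the side constraints.
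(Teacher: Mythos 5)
Your proposal is correct and takes essentially the same route as the paper: maximize the first-family bound $p_2(1-\lambda_1-\lambda_1 p_2)$ over $p_2$ with $p_1=1$ (interior optimum $p_2^*=(1-\lambda_1)/(2\lambda_1)$ for $\lambda_1>\tfrac13$, endpoint $p_2=1$ otherwise, with the side constraint verified), and then show the second family traces the same envelope. The only difference is that you explicitly carry out the second-family optimization that the paper dismisses as ``straightforward'' (setting $p_2=1$ and maximizing $p_1\bigl[1-(1+\lambda_2)p_1\bigr]/(1-p_1)$), and your key quantities check out: the critical point $p_1^{\star}=1-\sqrt{\lambda_2/(1+\lambda_2)}$ is admissible exactly when $\lambda_2<\tfrac13$, yielding $\lambda_1=1+2\lambda_2-2\sqrt{\lambda_2(1+\lambda_2)}$, which inverts to $\lambda_2=(1-\lambda_1)^2/(4\lambda_1)$, while the endpoint $p_1=(1-\lambda_2)/(1+\lambda_2)$ gives $\lambda_1=(1-\lambda_2)/2$, i.e.\ $\lambda_2=1-2\lambda_1$.
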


\begin{proof}
%First, we prove that the boundary of the stability region is spanned by the points $(\lambda_2,\;\lambda_1)=\left(\frac{p_2(1-p_1)}{1+p_1p_2},\;\frac{p_1}{1+p_1p_2}\right)$, which are the boundaries of the stability region of Dominant System 3 for the access probability vector $[p_1,\;p_2]^T$.
%
%Assume, for the sake of contradiction, that for a given $\lambda_1$ that the maximum value of $\lambda_2$ (i.e.,  $(\lambda_2,\;\lambda_1)$ lies on the boundary of the stability region) is achieved with some access probabilities $p_1$ and $\tilde{p}_2$ for which $\lambda_1\neq\frac{p_1}{1+p_1\tilde{p}_2}$. Let $\tilde{\lambda}_1=\frac{p_1}{1+p_1\tilde{p}_2}$. Clearly we must have $\ambda_1<\tilde{\lambda}_1$ as clear in Fig. 
First, we will derive the boundary of the stability region defined in lemma \ref{lem1}, which can be found as
\begin{equation}\label{P1}
\begin{split}
\lambda_2^*(\lambda_1)=&\mathbf{max}_{p_1,p_2} \;  p_2(1-\lambda_1-\lambda_1p_2) \\ &\text{\textbf{subject to }} 0\leq p_1\leq1,\;0\leq p_2\leq1,\;\lambda_1 <\frac{p_1}{1+p_1p_2}.
\end{split}
\end{equation}
Ignoring the constraints in the last optimization problem and differentiating the cost function in the last expression with respect to $p_2$ and equating the derivative to 0 we can get the optimal value for $p_2$, denoted by $p_2^*$, as\footnote{it is straightforward to prove that the cost function is concave in $p_2$.}
\begin{equation}
p_2^*=\frac{1-\lambda_1}{2\lambda_1}.
\end{equation}
Note that for $\lambda_1\geq \frac{1}{3}$, we have $p_2^*\leq 1$. Also, for $p_1=1$ and $p_2^*=\frac{1-\lambda_1}{2\lambda_1}$, the maximum value for the first queue arrival rate is $\frac{p_1}{1+p_1p_2} = \frac{2\lambda_1}{1+\lambda_1}>\lambda_1$ (i.e., the last constraint, $\lambda_1 <\frac{p_1}{1+p_1p_2}$ is satisfied with $p_1=1$), which means that for $\lambda_1\geq \frac{1}{3}$, the value for $p_2$ that maximizes $\lambda_2$ for a given $\lambda_1$ is given by $p_2^*=\frac{1-\lambda_1}{2\lambda_1}$, with all the constraints in (\ref{P1}) not being violated.

For $\lambda_1<\frac{1}{3}$, following similar steps to the $\lambda_1\geq\frac{1}{3}$ case, we can easily prove that the value for $p_2$ that maximizes $\lambda_2$ is giving by $p_2^*=1$; clearly the values of $p_1=1$ and $p_2^*=1$ can be easily checked to satisfy the constraints in (\ref{P1}) for $\lambda_1<\frac{1}{3}$.

Substituting the optimal values for $p_2$ for the different ranges of $\lambda_1$ we can easily get the boundary of the stability region spanned by the expression in lemma \ref{lem1} to be given by
\begin{equation}\label{der-bou}
\begin{split}
\lambda_2 =\left\{\begin{array}{ll}
1-2\lambda_1& \lambda_1 \leq \frac{1}{3} \\
\frac{(1-\lambda_1)^2}{4\lambda_1} & \lambda_1 >\frac{1}{3}.
\end{array}\right. 
\end{split}
\end{equation}

Finally, following a similar approach to that considered here it is straightforward to show that the boundary derived in (\ref{der-bou}) is the boundary of the stability regions defined in lemma \ref{lem2}, which completes the proof.
\end{proof}

%Fig. \ref{stabregion} shows the boundary of the stability region of the random access scheme with priorities. We have plotted the regions $\mathcal{R}(\mathbf{p})$, for $p_1$ and $p_2$ ranging from 0 to 1 with a step of 0.02, along with the derived stability region boundary given in the previous theorem.
%
%
%\begin{figure}
%\centering
%\includegraphics[width=.7\textwidth, height = 0.5\textwidth]{StabRegGr.eps}
%\caption{The stability region for the Random Access with Priorities scheme.}
%\label{stabregion}
%\end{figure}

In Fig. \ref{differentStab}, we have plotted the regions $\mathcal{R}(\mathbf{p})$, for $p_1$ and $p_2$ ranging from 0 to 1 with a step of 0.01, along with the derived stability region boundary given in the previous theorem. Fig. \ref{differentStab} also shows the stability region of the random access scheme, whose boundary is given by the following relation \cite{Rao88}
\begin{equation}
\sqrt{\lambda_1}+\sqrt{\lambda_2}=1.
\end{equation}
In Fig. \ref{differentStab}, we also show the boundary of the stability region for the time division (TD) based scheme (genie-aided), which serves as the stability region upper bound, given by\footnote{Time Division (TD) corresponds to full coordination between the two queues and requires knowledge of the queues arrival rates a priori before dividing the resources (time slots in this case).}
\begin{equation}
\lambda_1+\lambda_2=1.
\end{equation}

\begin{figure}
\centering
\includegraphics[width=.7\textwidth, height = 0.5\textwidth]{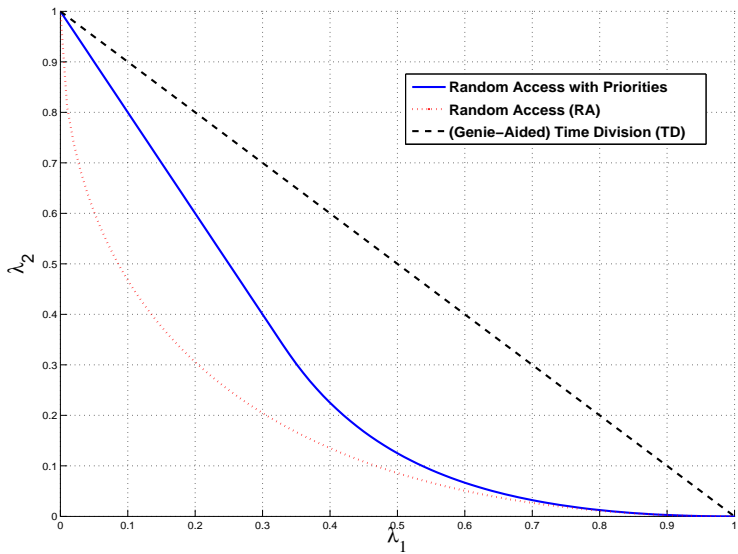}
\caption{The stability regions for the Random Access, Random Access with Priorities, and Time Division schemes.}
\label{differentStab}
\end{figure}

It is clear, and straightforward to analytically prove from the closed-form stability region boundary expressions, that the stability region for the RA scheme with priorities encloses the stability region of the RA scheme. This can explained as follows. For a given arrival rate at the first queue, $\lambda_1$, the RA with priority scheme will provide a better service rate to that queue if compared to the RA scheme and this means that queue 1 will be empty with a higher probability and this means that queue 2 will have a higher service rate as well under the RA with priority scheme as compared to the RA scheme. So setting a priority to the first queue in the retransmission will also result in a service rate improvement for the second queue; this is because the RA with priority scheme has some form of coordination between the two queues in the retransmission stage. Allowing for collision free retransmission from the first queue will decrease the amount of expected collisions between the transmissions of the two queues and this will result in better service rates for the two queues. 

%\section{Simulation Results}
%\label{simres}

%----------------------------------------------------------------------------------

%--------------------------------      Conclusions  --------------------------

%----------------------------------------------------------------------------------
\section{Conclusions}\label{Concl}
In this paper, we consider the problem of deriving the stability region for random access protocol with feedback exploitation. We consider the case of two interacting queue with priority set to one of the two queues. The two queues will access the channel through a conventional random access protocol and in the case of collision the higher priority queue will access the channel in the next slot with probability 1 while the other queue will back off. We derive the stability region for the random access with priorities protocol and prove that it contains the stability region for the conventional random access protocol. We show that not only the service rate for the higher priority queue is enhanced but also the service rate for the other queue is improved if compared to the conventional random access protocol.
\appendices

\section{Proof of Lemma \ref{lem1}}\label{app1}
In this Appendix, we provide a proof for Lemma~\ref{lem1}. We start by calculating the steady state distribution for the Markov chain shown in Fig.~\ref{fig2}.

First, it is clear that $\epsilon_0=0$ since the queue can never be in a retransmission state while being empty. Writing the balance equation around $1_R$, we have
\begin{equation}\label{1-R}
\epsilon_1=\lambda_1p_1p_2\pi_0+\left(1-\lambda_1\right)p_1p_2\pi_1.
\end{equation}
Then around $0_F$, we have
\begin{equation}\label{0-F}
(\lambda_1p_1p_2+\lambda_1(1-p_1))\pi_0=\left(1-\lambda_1\right)\epsilon_1+\left(1-\lambda_1\right)p_1(1-p_2)\pi_1.
\end{equation}
Substituting for $\epsilon_1$ from (\ref{1-R}) into (\ref{0-F}), and after some manipulations, we can get
\begin{equation}\label{00-F}
\pi_1=\frac{\lambda_1\left(1-p_1+\lambda_1p_1p_2\right)}{p_1\left(1-\lambda_1\right)\left(1-\lambda_1p_2\right)}\pi_0.
\end{equation}
Substituting from (\ref{00-F}) into (\ref{1-R}), we get
\begin{equation}\label{epis1}
\epsilon_1=\frac{\lambda_1p_2}{1-\lambda_1p_2}\pi_{0}.
\end{equation}

Writing the balance equation around $1_F$, we have
\begin{equation}\label{1-F}
\begin{split}
&\left(1-\lambda_1p_1\left(1-p_2\right)-\left(1-\lambda_1\right)(1-p_1)\right)\pi_1=\\ &\quad \lambda_1\pi_0+\lambda_1\epsilon_1+\left(1-\lambda_1\right)\epsilon_2+\left(1-\lambda_1\right)p_1\left(1-p_2\right)\pi_2.
\end{split}
\end{equation}
Around $2_R$, we have
\begin{equation}\label{2-R}
\epsilon_2=\lambda_1p_1p_2\pi_1+\left(1-\lambda_1\right)p_1p_2\pi_2.
\end{equation}
To get the relation between $\pi_1$ and $\pi_2$, we can substitute for the values of $\epsilon_1$, $\pi_0$ and $\epsilon_2$ from equations (\ref{1-R}), (\ref{0-F}) and (\ref{2-R}), respectively in equation (\ref{1-F}); after some tedious manipulation, we get
\begin{equation}\label{pi1-2}
\pi_2=\frac{\lambda_1\left(1-p_1+\lambda_1p_1p_2\right)}{p_1\left(1-\lambda_1\right)\left(1-\lambda_1p_2\right)}\pi_1.
\end{equation}
Substituting from (\ref{pi1-2}) into (\ref{2-R}), we get
\begin{equation}\label{eps2}
\epsilon_2=\frac{\lambda_1p_2}{1-\lambda_1p_2}\pi_1.
\end{equation}
Note that the Markov chain is repeating from stage 2 till the end. For $k\geq 2$, we have the following relations. 
\begin{equation}\label{pik}
\pi_k=\frac{\lambda_1\left(1-p_1+\lambda_1p_1p_2\right)}{p_1\left(1-\lambda_1\right)\left(1-\lambda_1p_2\right)}\pi_{k-1}.
\end{equation}
\begin{equation}\label{episk}
\epsilon_k=\frac{\lambda_1p_2}{1-\lambda_1p_2}\pi_{k-1}.
\end{equation}
The last relation can be used to prove the following relation between $\epsilon_k$ and $\epsilon_{k-1}$.
\begin{equation}\label{pik}
\epsilon_k=\frac{\lambda_1\left(1-p_1+\lambda_1p_1p_2\right)}{p_1\left(1-\lambda_1\right)\left(1-\lambda_1p_2\right)}\epsilon_{k-1}.
\end{equation}

The steady state distribution can now be written as follows.
\begin{itemize}
\item $\epsilon_0=0$.
\item $\pi_k=\rho^{k}\pi_0$, $k\geq1$ and $\rho=\frac{\lambda_1\left(1-p_1+\lambda_1p_1p_2\right)}{p_1\left(1-\lambda_1\right)\left(1-\lambda_1p_2\right)}$.
\item $\epsilon_1=\frac{\lambda_1p_2}{1-\lambda_1p_2}\pi_{0}$.
\item $\epsilon_k=\rho^{k-1}\epsilon_1$, $k\geq2$.
\end{itemize}
This steady state distribution can be easily checked to satisfy the balance equation at any general state (details are omitted since it is a rather straightforward, yet very tedious, procedure).

To get the value of the steady state probabilities, we apply the following normalization requirement.
\begin{equation}\label{norm1}
\begin{split}
&\sum_{k=0}^{\infty}(\pi_k+\epsilon_{k})=1\\&\quad \quad \rightarrow \pi_0+\sum_{k=1}^{\infty}(\pi_k+\epsilon_{k})=\pi_0\left(1+\frac{\lambda_1p_2}{1-\lambda_1p_2}\right)\sum_{k=0}^{\infty}\rho^{k}=1,
\end{split}
\end{equation}
where $\rho=\frac{\lambda_1\left(1-p_1+\lambda_1p_1p_2\right)}{p_1\left(1-\lambda_1\right)\left(1-\lambda_1p_2\right)}$ as defined above. 

Note that for the steady state distribution to exist, i.e. to have $\pi_0$ to be non zero, then we must have $\rho<1$, which is the stability condition for queue 1 in this dominant system. Therefore, the stability condition can be stated as
 \begin{equation}\label{stab-dom1}
\rho<1\;\rightarrow \lambda_1<\frac{p_1}{1+p_1p_2}.
\end{equation}

From the normalization condition in (\ref{norm1}), we can get the value of $\pi_0$ as
\begin{equation}\label{pi-0}
\pi_0=\frac{p_1-\lambda_1(1+p_1p_2)}{p_1(1-\lambda_1)}.
\end{equation}

In Dominant System 1, queue 2 will be served only in the states denoted by the subscript $F$ in Fig.~\ref{fig2} since in the retransmission states, denoted by the subscript $R$  in Fig.~\ref{fig2}, queue 2 will be in the back off mode. Hence, the service rate, $\mu_2$, for queue 2 in Dominant System 1 is given by
\begin{equation}\label{Dom2Qu2Ser}
\begin{split}
\mu_2&=p_2(1-\lambda_1)\pi_0+p_2(1-p_1)\lambda_1\pi_0+\sum_{k=1}^{\infty}p_2(1-p_1)\pi_k\\&=p_2(1-p_1\lambda_1)\pi_0+\sum_{k=1}^{\infty}p_2(1-p_1)\pi_k,
\end{split}
\end{equation}
where in the $0_F$ state, and with the arrival at the beginning of the slot assumption, queue 2 is served with a rate of $p_2(1-\lambda_1)\pi_0$ with no arrival at the beginning of the slot since queue 1 will not attempt any random access since it is empty, and $p_2(1-p_1)\lambda_1\pi_0$ with arrival at the slot beginning; for the other first transmission states, queue 2 will be served if it decides to access the medium, which occurs with probability $p_2$, and queue 1 decides not to access the medium, which occurs with probability $(1-p_1)$. After some manipulation, we can write the expression for $\mu_2$ as
\begin{equation}\label{Dom2Qu2SerFin}
\mu_2=p_2(1-\lambda_1-\lambda_1p_2).
\end{equation}
For the stability of queue 2, we must have
\begin{equation}\label{Dom2Qu2Arr}
\lambda_2<\mu_2=p_2(1-\lambda_1-\lambda_1p_2).
\end{equation}

\section{Proof of Lemma \ref{lem2}}\label{app2}
In this Appendix, we provide a proof for Lemma~\ref{lem2}. We start by calculating the steady state distribution for the Markov chain shown in Fig.~\ref{fig3}. The state transition matrix, $\mathbf{\Phi}$, of the Markov chain shown in Fig.~\ref{fig3} can be written as
\begin{equation}
\mathbf{\Phi}=\left(   \begin{array}{ ccccc}
\mathbf{B}& \mathbf{A}_0& \mathbf{0} & \mathbf{0}  &\cdots \\ \mathbf{A}_2& \mathbf{A}_1& \mathbf{A}_0 & \mathbf{0}  &\cdots
\\  \mathbf{0}  & \mathbf{A}_2& \mathbf{A}_1& \mathbf{A}_0 & \cdots \\  \mathbf{0} &\mathbf{0}  & \mathbf{A}_2& \mathbf{A}_1 & \cdots \\  \vdots &\vdots  & \vdots & \vdots & \ddots 
\end{array}\right)
\end{equation}
where 
\begin{equation*}
\begin{split}
\mathbf{B}&=\left(   \begin{array}{ cc}
(1-\lambda_2)+\lambda_2(1-p_1)p_2 & 0 \\ 0&0
\end{array}\right),
\\
\mathbf{A}_0&=\left(   \begin{array}{ cc}
(1-\lambda_2)(1-p_1)p_2 & 0 \\ 0&0
\end{array}\right),
\\
\mathbf{A}_1&=\left(   \begin{array}{ cc}
\lambda_2p_2(1-p_1)+(1-\lambda_2)(1-p_2) & 1-\lambda_2 \\ (1-\lambda_2 )p_1p_2&0
\end{array}\right),
\\
\mathbf{A}_2&=\left(   \begin{array}{ cc}
\lambda_2 & \lambda_2 \\ \lambda_2p_1p_2 & \lambda_2
\end{array}\right).
\end{split}
\end{equation*}
The steady state distribution vector is given by $\mathbf{v}=[\pi_0'\;\epsilon_0'\;\pi_1'\;\epsilon_1'\;\pi_2'\;\epsilon_2'\;\cdots]^T$ and $\mathbf{v} =\mathbf{\Phi} \mathbf{v}$.

The state transition matrix $\mathbf{\Phi}$ is a block-tridiagonal matrix; therefore the Markov chain shown in Fig.~\ref{fig3} is a homogeneous quasi birth-and-death (QBD) Markov chain \cite{latouche}. Note that to make the state transition matrix a block-tridiagonal matrix we have added a transition from the $0_{\text{OFF}}$ state to the $1_{\text{ON}}$ state as shown in Fig. \ref{FB2add} and this will preserve the structure of the state transitions between the different stages in the Markov chain. Note that adding this transition will not affect the stationary state distribution of the Markov chain as well as the balance equations since $\epsilon_0'=0$ even with the added transition since the Markov chain will never enter the $0_{\text{OFF}}$ state\footnote{The analysis presented here could have been used for analyzing the Markov chain shown in Fig.~\ref{fig2}; however, the structure of this Markov chain allowed for the use of a simpler approach that was adopted in Appendix \ref{app1}}. 

\begin{figure}
\centering
\includegraphics[width=.7\textwidth, height = 0.5\textwidth]{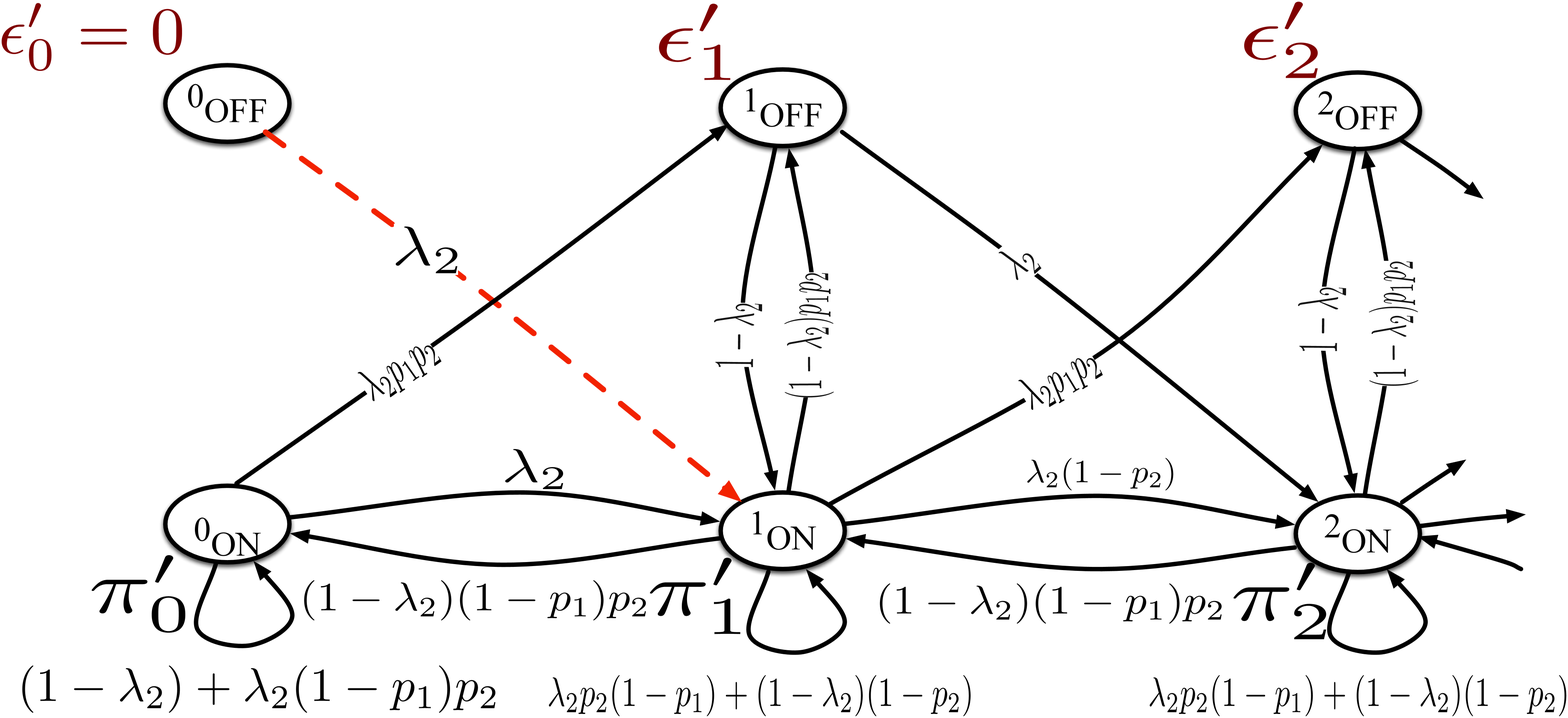}
\caption{The queue 2 Markov chain with added transition between $0_{\text{OFF}}$ and $1_{\text{ON}}$ to make the state transition matrix a block-tridiagonal matrix. }
\label{FB2add}
\end{figure}

Define the vector $\mathbf{v}_k'=[\pi_k' \;\epsilon_k']^T$. Note that $\mathbf{v}_0'=[\pi_0' \;0]^T$. The steady state distribution of the Markov chain shown in Fig. \ref{fig3} satisfies the following equation \cite{latouche}
\begin{equation}
\mathbf{v}'_k=\mathbf{R}^k\mathbf{v}_0',\quad k\geq 1,
\end{equation}
where the $2\times 2$ matrix $\mathbf{R}$ is given by the solution to the following equation.
\begin{equation}\label{R-balance}
\mathbf{A}_0+\mathbf{R}(\mathbf{A_1}-\mathbf{I}_{2})+\mathbf{R}^2\mathbf{A_2}=\mathbf{0}_{2\times 2},
\end{equation}
where $\mathbf{I}_2$ is the $2\times 2$ identity matrix and $\mathbf{0}_{2\times 2}$ is the all zeros $2 \times 2$ matrix.

To get the stationary distribution, we have to find the matrix 
\begin{equation*}
\mathbf{R}=\left(   \begin{array}{ cc}
r_{11}&r_{12}\\ r_{21}&r_{22}
\end{array}\right).
\end{equation*}
Note that for $\mathbf{v}_1'=\mathbf{R}\mathbf{v}_0'$, where $\mathbf{v}_0'=[\pi_0'\; 0]^T$ and $\mathbf{v}_1'=[\pi_1'\; \epsilon_1']^T$. Therefore, we have 
\begin{equation}
\begin{split}
r_{11}=\frac{\pi_1'}{\pi_0'} \text{ and }r_{21}=\frac{\epsilon_1'}{\pi_0'}.
\end{split}
\end{equation}
Writing the balance equation around the $0_{\text{ON}}$ in Fig.~\ref{fig3}, we have
\begin{equation}\label{Dom2-1}
\begin{split}
&(\lambda_2p_1p_2+\lambda_2(1-p_2))\pi_0'=(1-\lambda_2)(1-p_1)p_2\pi_1'\\&\quad\rightarrow \pi_1' =\frac{\lambda_2(1-p_2+p_1p_2)}{(1-\lambda_2)(1-p_1)p_2}\pi_0'.
\end{split}
\end{equation} 
Therefore, we have
\begin{equation}\label{Dom2-r11}
\begin{split}
r_{11} =\frac{\lambda_2(1-p_2+p_1p_2)}{(1-\lambda_2)(1-p_1)p_2}.
\end{split}
\end{equation} 

Writing the balance equation around $1_{\text{OFF}}$, we have 
\begin{equation}\label{Dom2-2}
\epsilon_1'=\lambda_2 p_1p_2\pi_0'+(1-\lambda_2) p_1p_2\pi_1'\rightarrow \epsilon_1' =\frac{\lambda_2 p_1}{1-p_1}\pi_0'.
\end{equation} 
Therefore, we have
\begin{equation}\label{Dom2-r21}
r_{21} =\frac{\lambda_2 p_1}{1-p_1}.
\end{equation}

To get the values of $r_{12}$ and $r_{22}$, we consider the transition across the border shown in Fig. \ref{border}. For the Markov chain to be positive recurrent then the probability of going across the border in both directions must be the same \cite{gallager-book}; hence, we have
\begin{equation}\label{Dom2-rr1}
\begin{split}
(1-\lambda_2)(1-p_1)p_2\pi_2'=(\lambda_2p_1p_2+\lambda_2(1-p_2))\pi_1'+\lambda_2\epsilon_1'.
\end{split}
\end{equation} 
But we have $\mathbf{v}_2'=\mathbf{R}\mathbf{v}_1'$, from which we have $\pi_2'=r_{11}\pi_1'+r_{12}\epsilon_1'$; using (\ref{Dom2-r11}) and (\ref{Dom2-rr1}), we can easily show that
\begin{equation}\label{Dom2-r12}
\begin{split}
r_{12} =\frac{\lambda_2}{(1-\lambda_2)(1-p_1)p_2}.
\end{split}
\end{equation} 

\begin{figure}
\centering
\includegraphics[width=.7\textwidth, height = 0.5\textwidth]{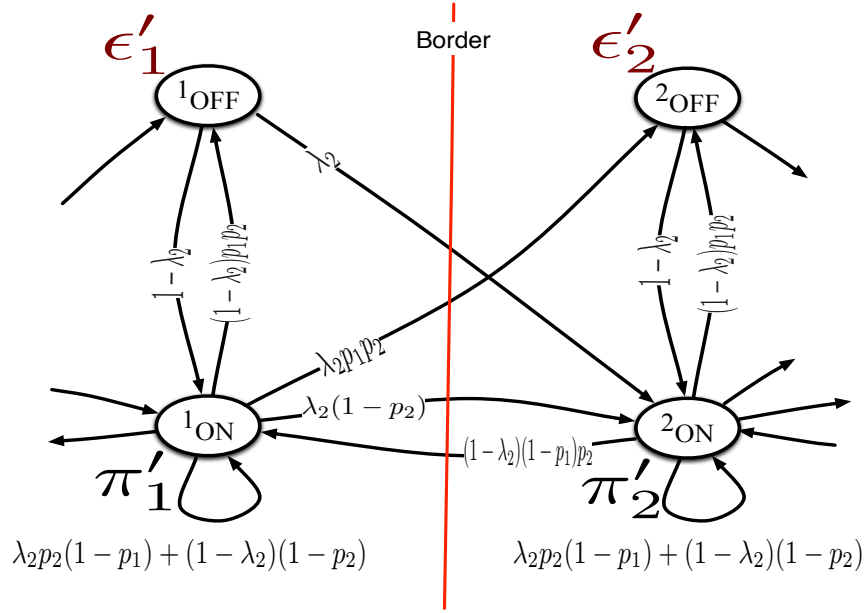}
\caption{The segment of the Markov chain used to calculate the values of $r_{12}$ and $r_{22}$.}
\label{border}
\end{figure}

Finally, the balance equation around $2_{\text{OFF}}$ can be written as \setcounter{equation}{40}
\begin{equation}\label{Dom2-5}
\epsilon_2'= \lambda_2p_1p_2\pi_1'+(1-\lambda_2)p_1p_2\pi_2' =r_{21}\pi_1'+r_{22}\epsilon_2'.
\end{equation} 
Substituting for $\pi_2'$ from (\ref{Dom2-rr1}), we can easily show that
\begin{equation}\label{Dom2-r22}
r_{22} =\frac{\lambda_2 p_1}{1-p_1}.
\end{equation} 

Now the matrix $\mathbf{R}$ is given by
\begin{equation}\label{MatrixR}
\mathbf{R}=\left(   \begin{array}{cc}
\frac{\lambda_2(1-p_2+p_1p_2)}{(1-\lambda_2)(1-p_1)p_2} & \frac{\lambda_2}{(1-\lambda_2)(1-p_1)p_2} \\ \frac{\lambda_2 p_1}{1-p_1} & \frac{\lambda_2 p_1}{1-p_1}
\end{array}\right),
\end{equation}
which can be easily checked to satisfy the balance equation given by (\ref{R-balance}).

To get the stationary distribution of the Markov chain shown in Fig.~\ref{fig3}, we apply the following normalization requirement.
\begin{equation}\label{norm22}
\sum_{k=0}^{\infty}(\pi_k'+\epsilon_{k}')=1\rightarrow [1\;1]\left(\sum_{k=0}^{\infty}\mathbf{R}^k\right)\mathbf{v}_0'=1.
\end{equation}
For the summation $\left(\sum_{k=0}^{\infty}\mathbf{R}^k\right)$ to converge we must have the spectral radius of the matrix $\mathbf{R}$, $\textbf{sp}(\mathbf{R})$, to be less than one \cite{latouche}\footnote{The spectral radius of a matrix is the maximum over the magnitudes of its eigenvalues.}. From (\ref{MatrixR}), we can easily get $\textbf{sp}(\mathbf{R})$ to be given by 
\begin{equation}\label{Dom2-SP}
\begin{split}
\textbf{sp}(\mathbf{R}) = \frac{\lambda_2 \left(1- p_2 - \lambda_2p_1p_2 + 2p_1p_2+ \sqrt{1 - 2p_2 + p_2^2 + 4p_1p_2   - 2\lambda_2p_1p_2 - 2\lambda_2 p_1 p_2^2  +\lambda_2^2p_1^2p_2^2}\right)}{2p_2\left(1 - \lambda_2 - p_1 + \lambda_2p_1\right)}.
\end{split}
\end{equation}

The requirement that $\textbf{sp}(\mathbf{R})<1$ can be used in the last expression to get the stability condition of the second queue arrival rate $\lambda_2$ as
\begin{equation}
\begin{split}
\lambda_2<\frac{p_2(1-p_1)}{1+p_1p_2}.
\end{split}
\end{equation} 

Going back to the normalization requirement in (\ref{norm22}), we have
\begin{equation}\label{no22}
[1\;1]\left(\sum_{k=0}^{\infty}\mathbf{R}^k\right)\mathbf{v}_0'=[1\;1]\left(\mathbf{I}_2-\mathbf{R}\right)^{-1} \left[\begin{array} {c}\pi_0' \\ 0\end{array}\right]=1.
\end{equation}
From the last expression, we can easily prove that $\pi_0'$ is given by
\begin{equation}\label{pi0}
\pi_0' = \frac{p_2-\lambda_2 - p_1p_2 - \lambda_2 p_1p_2}{(1-\lambda_2)(1-p_1)p_2}.
\end{equation}
Note that the requirement that $\pi_0'>0$, i.e. a non-zero empty queue probability, is satisfied if $\lambda_2<\frac{p_2(1-p_1)}{1+p_1p_2}$, which is the queue stability condition.

The service rate, $\mu_1'$, for queue 1 in Dominant System 2 can now be expressed as
\begin{equation}\label{mu22}
\begin{split}
\mu_1' &= p_1(1-\lambda_2)\pi_0'+p_1(1-p_2)\lambda_2\pi_0'+p_1(1-p_2)\sum_{k=2}^{\infty}\pi_k'+\sum_{k=2}^{\infty} \epsilon_{k}'
\\&=p_1(1-\lambda_2p_2)\pi_0'+[p_1(1-p_2)\;1]\left(\sum_{k=1}^{\infty}\mathbf{R}^k\right) \left[\begin{array} {c}\pi_0' \\ 0\end{array}\right]
\\&=p_1(1-\lambda_2p_2)\pi_0'+[p_1(1-p_2)\;1]\;\mathbf{R}\left(\mathbf{I}_2-\mathbf{R}\right)^{-1} \left[\begin{array} {c}\pi_0' \\ 0\end{array}\right]
\\&= \frac{p_1(1-p_1-\lambda_2p_1)}{(1-p_1)},
\end{split}
\end{equation}
where in the OFF states, queue 1 is served with probability 1 since queue 2 will be in the back off mode. For the stability of queue 1 in Dominant System 2 we must have
\begin{equation}\label{stab22}
\lambda_1<\mu_1' =\frac{p_1(1-p_1-\lambda_2p_1)}{(1-p_1)},
\end{equation}
which completes the proof.

%\section{Proof of Theorem \ref{thm1}}\label{app3}

\bibliographystyle{IEEEbib}
\bibliography{MyLib}

\end{document}